\let\proof\relax
\newtheorem{lemma}{Lemma}
\newtheorem{theorem}{Theorem}
\newtheorem{example}{Example}
\newcommand{\mbf}[1]{\mathbf{#1}}
\newcommand*{\transpose}{%
  {\mathpalette\@transpose{}}%
}
\begin{document}

\newcommand{\SB}[3]{
\sum_{#2 \in #1}\biggl|\overline{X}_{#2}\biggr| #3
\biggl|\bigcap_{#2 \notin #1}\overline{X}_{#2}\biggr|
}

\newcommand{\Mod}[1]{\ (\textup{mod}\ #1)}

\newcommand{\overbar}[1]{\mkern 0mu\overline{\mkern-0mu#1\mkern-8.5mu}\mkern 6mu}

\makeatletter
\newcommand*\nss[3]{%
  \begingroup
  \setbox0\hbox{$\m@th\scriptstyle\cramped{#2}$}%
  \setbox2\hbox{$\m@th\scriptstyle#3$}%
  \dimen@=\fontdimen8\textfont3
  \multiply\dimen@ by 4             
  \advance \dimen@ by \ht0
  \advance \dimen@ by -\fontdimen17\textfont2
  \@tempdima=\fontdimen5\textfont2  
  \multiply\@tempdima by 4
  \divide  \@tempdima by 5          
  \ifdim\dimen@<\@tempdima
    \ht0=0pt                        
    \@tempdima=\fontdimen5\textfont2
    \divide\@tempdima by 4          
    \advance \dimen@ by -\@tempdima 
    \ifdim\dimen@>0pt
      \@tempdima=\dp2
      \advance\@tempdima by \dimen@
      \dp2=\@tempdima
    \fi
  \fi
  #1_{\box0}^{\box2}%
  \endgroup
  }
\makeatother

\makeatletter
\renewenvironment{proof}[1][\proofname]{\par
  \pushQED{\qed}%
  \normalfont \topsep6\p@\@plus6\p@\relax
  \trivlist
  \item[\hskip\labelsep
        \itshape
    #1\@addpunct{:}]\ignorespaces
}{%
  \popQED\endtrivlist\@endpefalse
}
\makeatother

\makeatletter
\newsavebox\myboxA
\newsavebox\myboxB
\newlength\mylenA

\newcommand*\xoverline[2][0.75]{%
    \sbox{\myboxA}{$\m@th#2$}%
    \setbox\myboxB\null
    \ht\myboxB=\ht\myboxA%
    \dp\myboxB=\dp\myboxA%
    \wd\myboxB=#1\wd\myboxA
    \sbox\myboxB{$\m@th\overline{\copy\myboxB}$}
    \setlength\mylenA{\the\wd\myboxA}
    \addtolength\mylenA{-\the\wd\myboxB}%
    \ifdim\wd\myboxB<\wd\myboxA%
       \rlap{\hskip 0.5\mylenA\usebox\myboxB}{\usebox\myboxA}%
    \else
        \hskip -0.5\mylenA\rlap{\usebox\myboxA}{\hskip 0.5\mylenA\usebox\myboxB}%
    \fi}
\makeatother

\xpatchcmd{\proof}{\hskip\labelsep}{\hskip3.75\labelsep}{}{}

\pagestyle{plain}

\title{\fontsize{22.59}{28}\selectfont Non-adaptive Quantitative Group Testing Using Irregular Sparse Graph Codes
}

\author{Esmaeil Karimi, Fatemeh Kazemi, Anoosheh Heidarzadeh, Krishna R. Narayanan, and Alex Sprintson\thanks{The authors are with the Department of Electrical and Computer Engineering, Texas A\&M University, College Station, TX 77843 USA (E-mail: \{esmaeil.karimi, fatemeh.kazemi, anoosheh, krn, spalex\}@tamu.edu).}\thanks{This material is based upon work supported by the National Science Foundation under Grants No. 1718658, 1642983, and 1547447.}}


\maketitle 

\thispagestyle{plain}

\begin{abstract}
This paper considers the problem of Quantitative Group Testing (QGT) where there are some defective items among a large population of $N$ items. We consider the scenario in which each item is defective with probability $K/N$, independently from the other items. In the QGT problem, the goal is to identify all or a sufficiently large fraction of the defective items by testing groups of items, with the minimum possible number of tests. 
In particular, the outcome of each test is a non-negative integer which indicates the number of defective items in the tested group. In this work, we propose a non-adaptive QGT scheme for the underlying randomized model for defective items, which utilizes sparse graph codes over {irregular} bipartite graphs with optimized degree profiles on the left nodes of the graph as well as binary $t$-error-correcting BCH codes. We show that in the sub-linear regime, i.e., when the ratio $K/N$ vanishes as $N$ grows unbounded, the proposed scheme with ${m=c(t,d)K(t\log (\frac{\ell N}{c(t,d)K}+1)+1)}$ tests can identify all the defective items with probability approaching $1$, where $d$ and $\ell$ are the maximum and average left degree, respectively, and $c(t,d)$ depends only on $t$ and $d$ (and does not depend on $K$ and $N$). For any $t\leq 4$, the testing and recovery algorithms of the proposed scheme have the computational complexity of $\mathcal{O}(N\log \frac{N}{K})$ and $\mathcal{O}(K\log \frac{N}{K})$, respectively. The proposed scheme outperforms two recently proposed non-adaptive QGT schemes for the sub-linear regime, including our scheme based on regular bipartite graphs and the scheme of Gebhard \emph{et al.}, in terms of the number of tests required to identify all defective items with high probability. 
\end{abstract}

\section{introduction}
We consider the Quantitative Group Testing (QGT) problem which is concerned with recovering all or a sufficiently large fraction of defective items in a given population of items, each of which is either defective or not. In the QGT problem, the result of a test on any group of items reveals the number of defective items in the tested group. The objective is to design a test plan for QGT with minimum number of tests. 

There are two different models for the defective items in the literature: \emph{deterministic} and \emph{randomized}. In the deterministic model (a.k.a. the combinatorial model), the exact number of defective items is known, whereas in the randomized model (a.k.a. the probabilistic model), each item is defective with some probability, independent of the other items \cite{sobel1966binomial,Mazumdar:2016:NGT:3024356.3025269,jiangnear,WZC:17,8437774}. In this work, we consider the randomized model in which each item is defective with probability $\frac{K}{N}$, independently from the other items, where $N$ is the total number of items, and the parameter $K$ represents the expected number of defective items. It should be noted that the deterministic model can be readily justified using the fact that performing one initial test on all items reveals the number of defective items. Notwithstanding, in most practical applications, performing a test on all items may not be feasible, particularly when the number of items is very large. On the other hand, assuming that the expected number of defective items is known is a more reasonable assumption for many practical applications. Moreover, it should be noted that the QGT schemes designed for the scenarios in which the randomized model is considered are applicable to the scenarios considering the deterministic model, but this relation does not work in reverse order.

In this paper, we are interested in \emph{non-adaptive} QGT schemes, where all tests are designed in advance. This is in contrast to \emph{adaptive} QGT schemes, in which the design of each test depends on the results of the previous tests. In most practical applications, when compared to adaptive QGT schemes, non-adaptive QGT schemes are preferred because all tests can be executed at once in parallel. 


\subsection{Related Work and Applications}
The QGT problem can be traced back to the seminal work by Shapiro in \cite{S:60}. To date, several adaptive and non-adaptive QGT strategies have been proposed, see, e.g.,~\cite{B:09,jiangnear,WZC:17,8437774,gebhard2019quantitative,DBLP:journals/corr/abs-1901-07635} and references therein. Using a simple information theoretic argument, one can easily show the information-theoretic lower bound $\log_K {N \choose K} \approx (K\log (N/K))/\log K$ on the minimum number of tests for any adaptive QGT scheme.\footnote{Throughout the paper the base of $\log$ is $2$, unless explicitly noted otherwise.} However, this lower bound is not tight for non-adaptive QGT schemes. In particular, it was shown in \cite{L:75} and \cite{djackov1975search} that any non-adaptive QGT scheme requires at least $(2K\log (N/K))/\log K$ tests. For the linear regime in which the number of defective items is a constant fraction of the total number of items, the QGT problem has been fully solved~\cite{NIPS2017_6641,8410892}. However, for the sub-linear regime, i.e., when the number of defective items grow sub-linearly in the total number of items, the QGT problem is widely open. 
Recently, in \cite{DBLP:journals/corr/abs-1901-07635}, we proposed the first non-adaptive QGT scheme for the sub-linear regime that requires ${m\approx 1.19K\log \left(4.74\frac{N}{K}\right)}$ tests to recover all the defective items with probability approaching $1$. Shortly after, Gebhard \emph{et al.} in \cite{gebhard2019quantitative} proposed a greedy non-adaptive QGT scheme that requires ${m=\frac{1+\sqrt{\theta}}{1-\sqrt{\theta}} K\ln \left(\frac{N}{K}\right)}$ tests to recover all $K = N^{\theta}$ (for $0<\theta<1$) defective items with high probability. 

Aside from the theoretical endeavors, the QGT problem has also gained substantial attention over the last few years from the practical perspective. In particular, the QGT problem has been studied for a wide range of applications from machine learning and computational biology \cite{acharya2019optimal,cao2014quantitative} to multi-access communication, traffic monitoring, and network tomography \cite{10.1007/978-3-030-25027-0_10,wang2015group,6121984}. It should be noted that most of these applications are being run repeatedly over time, and for such applications, minimizing the constant factor hidden in the order is also of prominent importance. This observation is the primary motivation for this work. 

\subsection{Main Contributions}
In this work, we propose a non-adaptive QGT scheme for the scenarios in which the randomized model is considered for defective items. The testing algorithm of the proposed scheme relies on sparse graph codes over {irregular} bipartite graphs with optimized left-degree profiles as well as binary $t$-error-correcting BCH codes. 
As part of the process of optimizing the left-degree profile of the graph, we take advantage of the  density-evolution technique to analyze the probability of error of the proposed peeling-based recovery algorithm, i.e., the probability that a defective item remains unidentified over the iterations of the recovery algorithm. We provide provable guarantees on the performance of the proposed scheme in terms of the required number of tests. In particular, we show that in the sub-linear regime the proposed scheme requires ${m=c(t,d)K(t\log (\frac{\ell N}{c(t,d)K}+1)+1)}$ tests to identify all defective items with high probability, where $d$ and $\ell$ are the maximum and average left degree, respectively, and $c(t,d)$ is constant with respect to $K$ and $N$, and depends only on $t$ and $d$. Moreover, we show that, for any $t\leq 4$, the testing and recovery algorithms of the proposed scheme have the computational complexity of $\mathcal{O}(N\log \frac{N}{K})$ and $\mathcal{O}(K\log \frac{N}{K})$, respectively.

\section{Problem Setup and Notations}\label{sec:SN}
Throughout the paper, we denote vectors and matrices by bold-face small and capital letters, respectively. For an integer $i\geq 1$, we denote $\{1,\dots,i\}$ by $[i]$.

In this work, we consider a quantitative group testing (QGT) problem with a randomized model for defective items, where in a population of $N$ items, each item is defective with probability $\frac{K}{N}$, independently from the other items. The problem is to identify all or a sufficiently large fraction of the defective items by testing groups of items, with the minimum possible number of tests, where the outcome of each test is a non-negative integer that indicates the number of defective items in the tested group. The focus of this work is on the sub-linear regime where the parameter $K$ grows sub-linearly with the total number of items ($N$).

 We define the support vector $\mathbf{x}\in \{0,1\}^N$ to represent the set of $N$ items. The $i$-th component of $\mathbf{x}$ is $1$ if and only if the $i$-th item is defective. In a non-adaptive QGT problem, designing a test scheme consisting of $m$ tests is equivalent to the construction of a binary matrix with $m$ rows which is referred to as measurement matrix. We let matrix ${\textbf{A}\in \{0,1\}^{m\times N}}$ denote the measurement matrix wherein the non-zero indices in the $i$-th row correspond to the items that are present in the $i$-th test. We also let vector $\mathbf{y}\in\{ 0,1,2,\dots\}^m$ denote the outcomes of the $m$ tests in the following matrix form.
\begin{equation}\label{eq:gtresult}
\mathbf{y}=[y_{1},\dots,y_{m}]^{\mathsf{T}}=\mathbf{A}\mathbf{x}.
\end{equation}

The objective is to construct a measurement matrix with a small number of rows (tests) that successfully identifies the set of defective items with high probability given the test results vector $\mathbf{y}$.

\section{Proposed Algorithm \label{sec:main results}}

\subsection{Testing algorithm}

We employ a framework similar to that proposed in \cite{DBLP:journals/corr/abs-1901-07635} for designing the measurement matrix $\textbf{A}$; however, in  our design we utilize {irregular} bipartite graphs with carefully designed left-degree profile, instead of bi-regular bipartite graphs.

Consider a randomly generated bipartite graph with $N$ left nodes and $M$ right nodes where each right node is connected to $r$ left nodes. The left nodes are connected to the right nodes according to a left-node degree distribution given by ${L(x)\triangleq \sum_{i=1}^{d}L_ix^{i}}$ where $d$ and $L_i$ denote the maximum degree of a left node and the probability that a randomly selected left node in the graph has degree $i$, respectively. We denote the adjacency matrix of such a graph by ${\mathbf{T} \in \{0,1\}^{M\times N}}$ where each column in $\mathbf{T}$ corresponds to a left node, and each row in $\mathbf{T}$ corresponds to a right node and has exactly $r$ ones. The adjacency matrix $\mathbf{T}$ can be represented in the matrix form $\mathbf{T}=[\mathbf{t}_{1}^{\mathsf{T}},\mathbf{t}_{2}^{\mathsf{T}},\dots,\mathbf{t}_{M}^{\mathsf{T}}]^{\mathsf{T}}$, where $\mathbf{t}_{i}$ denotes the $i$-th row.

A carefully designed signature matrix $\mathbf{U}\in \{0,1\}^{s\times r}$ is used to assign $s$ tests to each right node. We place an all-ones row of length $r$ as the first row of the signature matrix. The first row in $\mathbf{U}$ corresponds to a test whose result reveals the number of defective items connected to a right node. The rest of the rows in $\mathbf{U}$ are the rows in the parity-check matrix of a binary $t$-error-correcting BCH code \cite{lin2001error}. Given that the number of defective items connected to a right node is no more than $t$, the results of the tests corresponding to the rows in the parity-check matrix can be used to identify the defective items connected to the right node. Considering that the number of columns is $r$, the number of rows in the parity-check matrix of a $t$-error-correcting BCH code is given by $R=t\log ({r+1})$. The signature matrix $\mathbf{U}$ can then be represented by ${\mathbf{U}=[\mathbf{1}_{1\times r}^{\mathsf{T}} ,\mathbf{H}_t^{\mathsf{T}}]^{\mathsf{T}}}$, where $\mathbf{1}_{1\times r} $ is an all-ones row of length $r$, and ${\mathbf{H}_t \in \{0,1\}^{R\times r}}$ is the parity-check matrix of a binary $t$-error-correcting BCH code. One can readily observe that the number of rows in $\mathbf{U}$ is given by ${s=R+1=t\log (r+1)+1}$. 

Now, we show the construction process of the measurement matrix using the adjacency matrix $\mathbf{T}$ and the signature matrix $\mathbf{U}$. Let the measurement matrix be given by ${\mathbf{A}=[\mathbf{A}_{1}^{\mathsf{T}},\dots,\mathbf{A}_{M}^{\mathsf{T}}]^{\mathsf{T}}}$ where ${\mathbf{A}_i\in \{0,1\}^{s\times N}}$ is a block matrix that represents the $s$ tests at the $i$-th right node. Let $\mathbf{u}_j$ denote the $j$-th column of the signature matrix. Note that the number of columns in the signature matrix $\mathbf{U}$ is $r$, and there are exactly $r$ ones in each row of the adjacency matrix $\mathbf{T}$. The block matrix $\mathbf{A}_i$ is then constructed by replacing zeros and ones in the $i$-th row of the adjacency matrix, $\mathbf{t}_i$, by all-zero columns and the columns of the signature matrix, respectively, as follows:
\begin{align}\label{eq:measureblock}
 \mbf{A}_i=[\mbf{0},\ldots,\mbf{0},\mbf{u}_1, \mbf{0},\ldots, \mbf{u}_2,\mbf{0}, \ldots, \mbf{u}_{r}]
 \end{align} where $\mbf{t}_i =[0,\ldots,0,\hspace{0.6ex}1,\hspace{0.9ex} 0, \ldots,\hspace{0.6ex}1,\hspace{0.9ex}0, \ldots, \hspace{0.9ex}1]$. 
In other words, we place the $r$ columns of the signature matrix at the coordinates of the $r$ ones in the row $\mathbf{t}_i$, and then we replace zeros in $\mathbf{t}_i$ by all-zero columns. The total number of rows in the measurement matrix $\mathbf{A}$ which is equivalent to the total number of tests in the proposed scheme is given by ${m=M\times s =M(t\log (r+1)+1)}$. The following example helps to better understand the construction process of the measurement matrix.

\begin{example}\label{ex:example1}
Let $\mathbf{T}$ denote the adjacency matrix of {an irregular} bipartite graph with $N=14$ left nodes and $M=3$ right nodes of degree $r=7$. The edge connections of the left side satisfies the following left node degree distribution given by ${L(x)=\frac{10}{14}x+ \frac{1}{14}x^{2}+\frac{3}{14}x^{3}}$.

\setcounter{MaxMatrixCols}{14}
\[
\mathbf{T}= \begin{bmatrix}
     0 &\textcolor{blue}{1} & 0 & \textcolor{lime}{1} & \textcolor{orange}{1} & 0 & 0 & 0 & \textcolor{green}{1} & \textcolor{red}{1} & 0 & 0 & \textcolor{brown}{1} & \textcolor{yellow}{1}\\
    0 & 0 & \textcolor{blue}{1} & \textcolor{lime}{1}  & 0 & 0 & \textcolor{orange}{1} & \textcolor{green}{1} & 0 & \textcolor{red}{1} & 0 & \textcolor{brown}{1} & \textcolor{yellow}{1} & 0\\
      \textcolor{blue}{1} & 0 & 0 & \textcolor{lime}{1} & 0 & \textcolor{orange}{1} & 0 & \textcolor{green}{1} & 0 & \textcolor{red}{1} & \textcolor{brown}{1} & 0 & \textcolor{yellow}{1} & 0\\
      
\end{bmatrix}.
\] 
Also, we let $\mbf{H}_1$ and ${\mathbf{U}=[\mathbf{1}_{1\times 7}^{\mathsf{T}},\mathbf{H}_1^{\mathsf{T}}]^{\mathsf{T}}}$ denote the parity-check matrix of a binary $t=1$-error-correcting BCH code of length $r=7$ and the signature matrix, respectively, 
 \[
         \mbf{H}_1 =[\mathbf{h}_1,\dots,\mathbf{h}_7]=  \begin{bmatrix}
0 & 0  & 1 & 0 & 1  & 1 & 1\\
 0  & 1 & 0 & 1  & 1 & 1 & 0\\
1 & 0 & 0 & 1 & 0 & 1 & 1
\end{bmatrix},
\]
 
\[
\mbf{U} = \begin{bmatrix}
 \textcolor{blue}{1} & \textcolor{lime}{1}  & \textcolor{orange}{1} &  \textcolor{green}{1} & \textcolor{red}{1}  & \textcolor{brown}{1} & \textcolor{yellow}{1}\\
\textcolor{blue}{0} & \textcolor{lime}{0}  & \textcolor{orange}{1} & \textcolor{green}{0} & \textcolor{red}{1}  & \textcolor{brown}{1} & \textcolor{yellow}{1}\\
\textcolor{blue}{0}  & \textcolor{lime}{1} & \textcolor{orange}{0} & \textcolor{green}{1}  & \textcolor{red}{1} & \textcolor{brown}{1} & \textcolor{yellow}{0}\\
\textcolor{blue}{1} & \textcolor{lime}{0} & \textcolor{orange}{0} & \textcolor{green}{1} & \textcolor{red}{0} & \textcolor{brown}{1} & \textcolor{yellow}{1}
\end{bmatrix}.
\] 
The measurement matrix $\mathbf{A}$ can then be constructed by following the procedure explained earlier,
\[
   \mathbf{A}= \begin{bmatrix}
 0 & \textcolor{blue}{1} & 0 & \textcolor{lime}{1} & \textcolor{orange}{1} & 0 & 0 & 0 & \textcolor{green}{1}  & \textcolor{red}{1} & 0 & 0 & \textcolor{brown}{1} & \textcolor{yellow}{1} \\
0 & \textcolor{blue}{0} & 0 & \textcolor{lime}{0} & \textcolor{orange}{1} & 0 & 0 & 0& \textcolor{green}{0} & \textcolor{red}{1} & 0 & 0 & \textcolor{brown}{1} & \textcolor{yellow}{1} \\
 0 & \textcolor{blue}{0} & 0 & \textcolor{lime}{1} & \textcolor{orange}{0} & 0 & 0 & 0& \textcolor{green}{1}  & \textcolor{red}{1} & 0 & 0& \textcolor{brown}{1} &  \textcolor{yellow}{0} \\
0 & \textcolor{blue}{1} & 0 & \textcolor{lime}{0} & \textcolor{orange}{0} & 0 & 0 & 0& \textcolor{green}{1} & \textcolor{red}{0} & 0 & 0& \textcolor{brown}{1} & \textcolor{yellow}{1} \\ \hline
           
0 & 0 &\textcolor{blue}{1} & \textcolor{lime}{1} &0 & 0 & \textcolor{orange}{1} & \textcolor{green}{1} & 0 & \textcolor{red}{1} & 0 & \textcolor{brown}{1} & \textcolor{yellow}{1} & 0\\
0 & 0 &\textcolor{blue}{0} & \textcolor{lime}{0} & 0 & 0 &\textcolor{orange}{1} & \textcolor{green}{0} & 0 & \textcolor{red}{1} & 0 & \textcolor{brown}{1} & \textcolor{yellow}{1} & 0 \\
0 & 0 &\textcolor{blue}{0} & \textcolor{lime}{1} & 0 & 0 &\textcolor{orange}{0} & \textcolor{green}{1} & 0 & \textcolor{red}{1} & 0 & \textcolor{brown}{1} & \textcolor{yellow}{0} & 0\\
0 & 0 &\textcolor{blue}{1} & \textcolor{lime}{0} & 0 & 0 &\textcolor{orange}{0} & \textcolor{green}{1} & 0 & \textcolor{red}{0} & 0 & \textcolor{brown}{1} & \textcolor{yellow}{1} & 0\\ \hline 
           
 \textcolor{blue}{1} & 0 & 0 & \textcolor{lime}{1} &0 & \textcolor{orange}{1} & 0 &\textcolor{green}{1} & 0 & \textcolor{red}{1} & \textcolor{brown}{1} & 0 & \textcolor{yellow}{1} & 0\\
 \textcolor{blue}{0} & 0 & 0 & \textcolor{lime}{0} &0 & \textcolor{orange}{1} & 0 &\textcolor{green}{0} & 0 & \textcolor{red}{1} & \textcolor{brown}{1} & 0 & \textcolor{yellow}{1} & 0\\
 \textcolor{blue}{0} & 0 & 0 & \textcolor{lime}{1} &0 & \textcolor{orange}{0} & 0 &\textcolor{green}{1} & 0 & \textcolor{red}{1} & \textcolor{brown}{1} & 0 & \textcolor{yellow}{0} & 0\\
 \textcolor{blue}{1} & 0 & 0 & \textcolor{lime}{0} &0 & \textcolor{orange}{0} & 0 & \textcolor{green}{1} & 0 & \textcolor{red}{0} & \textcolor{brown}{1} & 0 & \textcolor{yellow}{1} & 0
       
         \end{bmatrix}.
\]
\end{example}
 
\subsection{Recovery Algorithm}\label{decoding}

The recovery algorithm is similar to the peeling decoding algorithm, and it proceeds in an iterative manner as follows. During each iteration, the recovery algorithm inspects all the right nodes, and identifies and resolves any right node which is connected to $t$ or less number of defective items (for more details, see the proof of \cite[Lemma 1]{DBLP:journals/corr/abs-1901-07635}). Then, the recovery algorithm peels the edges connected to the identified defective items off the graph, and the next iteration begins. When no (not-yet-resolved) right node connected to $t$ or less number of defective items can be found, the recovery algorithm terminates. Below, we provide an illustrative example of the recovery algorithm.
\begin{example}
Consider the scenario in Example~\ref{ex:example1}. Suppose that items $4$,$8$, and $11$ are defective. Let the support vector $\mathbf{x}=[0,0,0,1,0,0,0,1,0,0,1,0,0,0]^T$ represent the set of $N=14$ items. The test results vector $\mathbf{y}$ according to the testing algorithm using the measurement matrix $\mathbf{A}$ constructed in Example~\ref{ex:example1} can be expressed as follows:
\begin{align*}
   \mathbf{y}=[y_1,\cdots,y_{12}]^{\mathsf{T}}=\mathbf{A}\mathbf{x}=
    \begin{bmatrix}
   \mathbf{u}_2  \\
   \mathbf{u}_{2}+ \mathbf{u}_{4} \\
   \mathbf{u}_{2}+\mathbf{u}_{4} +\mathbf{u}_{6}\\
    \end{bmatrix}.
\end{align*}
The results of the tests corresponding to the right nodes $1,2,3$ are respectively given by
\[ [y_1,y_2,y_3,y_4]^{\mathsf{T}}=\mathbf{u}_2=[1,0,1,0]^{\mathsf{T}},\]
\[ [y_5,y_6,y_7,y_8]^{\mathsf{T}}= \mathbf{u}_{2}+ \mathbf{u}_{4}=[2,0,2,1]^{\mathsf{T}},\]
\[ [y_9,y_{10},y_{11},y_{12}]^{\mathsf{T}}=\mathbf{u}_2+\mathbf{u}_{4} +\mathbf{u}_{6}=[3,1,3,2]^{\mathsf{T}}.\]

Since we used the parity-check matrix of a $t=1$-error-correcting BCH code to build the signature matrix, each right node can be resolved (i.e., all items connected to the right node can be identified) if it is connected to at most one defective item. The first test result associated to a right node shows the number of defective items connected to that right node. In the first iteration, the decoding algorithm can only resolve the first right node because $y_1=1$ and $y_5,y_9\neq 1$. Using $[y_2,y_3,y_4]^{\mathsf{T}}=\mathbf{h}_2=[0,1,0]^{\mathsf{T}}$, by using a BCH decoding algorithm we can identify the second item connected to the first right node, i.e., item $4$, as a defective item. Subtracting off the contribution of the item $4$ from the test results corresponding to the unresolved right nodes, the updated test results will be as follows:
\[ [y_5,y_6,y_7,y_8]^{\mathsf{T}}= \mathbf{u}_{4}=[1,0,1,1]^{\mathsf{T}}\]
\[ [y_9,y_{10},y_{11},y_{12}]^{\mathsf{T}}=\mathbf{u}_{4} +\mathbf{u}_{6}=[2,1,2,2]^{\mathsf{T}}\]

In the second iteration, the recovery algorithm resolves the second right node because $y_5=1$ and $y_9\neq 1$. A BCH decoding algorithm uses $[y_6,y_7,y_8]^{\mathsf{T}}=\mathbf{h}_4=[0,1,1]^{\mathsf{T}}$, and declares the forth item connected to the second right node, i.e., item $8$, as a defective item. Similarly as in the case of item $4$ in the first iteration, subtracting off the contribution of the item $8$ from the test results corresponding to the unresolved right nodes, the updated test results will be as follows:
\[ [y_9,y_{10},y_{11},y_{12}]^{\mathsf{T}}=\mathbf{u}_{6}=[1,1,1,1]^{\mathsf{T}}\]
Since $y_9=1$, the recovery algorithm is then able to resolve the third right node in the third iteration. Looking at ${[y_{10},y_{11},y_{12}]^{\mathsf{T}}=\mathbf{h}_{6}=[1,1,1]^{\mathsf{T}}}$, by using a BCH decoding algorithm we can identify the sixth item connected to the third right node, i.e., item $11$, as a defective item. Since all $3$ right nodes are resolved, the recovery algorithm cannot find any not-yet-resolved right node (connected to $1$ or less defective items), and hence the recovery algorithm terminates. For this example, the recovery algorithm successfully identified all $3$ defective items. 
\end{example}

\section{Main Results}
We present our main results in this section. Theorem~\ref{theo:main1} specifies the number of tests required by the proposed QGT scheme in the sub-linear regime. Theorem~\ref{thm:main2} states the computational complexity of the testing and recovery algorithms of the proposed QGT scheme. The proofs of Theorem~\ref{theo:main1} and~\ref{thm:main2} are given in Section~\ref{sec:proofs}.

\begin{theorem}\label{theo:main1}
In the sub-linear regime, the proposed QGT scheme requires ${m=c(t,d)K(t\log (\frac{\ell N}{c(t,d)K}+1)+1)}$ tests to identify all defective items with probability approaching $1$, where $d$ and $\ell$ are the maximum and average left degree, respectively; and $c(t,d)$ is constant in $N$ and $K$, and depends only on $t$ and $d$. Table~\ref{Table:constant1} shows the values of $c(t,d)$ for $t=1$ and ${d\in \{3,4,\cdots,18\}}$, and Table~\ref{Table:constant2} (or respectively, Table~\ref{Table:constant3}) shows the values of $c(t,d)$ for $t=2$ (or respectively, $t=3$) and ${d\in \{2,3,\cdots,17\}}$. 

\begin{theorem}\label{thm:main2}
For any $t\leq 4$, the testing and recovery algorithms of the proposed QGT scheme have the computational complexity of $\mathcal{O}(N\log \frac{N}{K})$ and $\mathcal{O}(K\log \frac{N}{K})$, respectively.
\end{theorem}

\begin{table*}[t]
\centering
\caption{\small{The constant $c(t,d)$ for $t=1$ and ${d\in \{3,4,\cdots,18\}}$.}}
\begin{tabular}{| c | c | c | c |c | c | c | c | c | c | c | c | c | c | c | c |c |}
\hline
$d$ & 3 & 4 & 5 & 6  & 7 & 8 & 9 & 10 & 11 & 12 & 13 & 14 & 15 & 16 & 17 & 18\\ \hline
$\lambda_2$ & & & & & & & & & & & & & & & & \\ \hline
$\lambda_3$ & 1 & 0.785 & 0.765 & 0.746 & 0.723 & 0.705 & 0.69 & 0.676 & 0.658 & 0.646 & 0.634 & 0.621 & 0.611 & 0.595 & 0.579 & 0.564\\ \hline
$\lambda_4$ &  & 0.215 & &  & &  & & &  & & & &   & & & \\\hline
$\lambda_5$ &  & & 0.235 &  & &  & & &  & & & &  & & & \\ \hline
$\lambda_6$ &  &  & & 0.254 & &  & & &  & & & &   & & & \\ \hline
$\lambda_7$ &  &  & &  & 0.277 &  & & &  & & & &   & & & \\ \hline
$\lambda_8$ &  &  & &  & & 0.295 & & &  & & & &  & & & \\ \hline
$\lambda_9$ &  &  & &  & &  & 0.31 & &  & & & &   & & & \\ \hline
$\lambda_{10}$ &  & & &  & &  & & 0.324 &  & & & &   & & & \\ \hline
$\lambda_{11}$ &  & & &  & &  & & & 0.342 & & & &  & & & \\ \hline
$\lambda_{12}$ &  & & &  & &  & & &  & 0.354 & & &   & & & \\ \hline
$\lambda_{13}$ &  & & &  & &  & & &  & & 0.366 & &  & & & \\ \hline
$\lambda_{14}$ &  &  & &  & &  & & &  & & & 0.379 &  & & & \\ \hline
$\lambda_{15}$ &  & & &  & &  & & &  & & & &  0.389 & & & \\ \hline
$\lambda_{16}$ &  & & &  & &  & & &  & & & &  & 0.405 & 0.005 & \\ \hline
$\lambda_{17}$ &  & & &  & &  & & &  & & & &  & & 0.416 & 0.003\\ \hline
$\lambda_{18}$ &  & & &  & &  & & &  & & & &  & &  & 0.433\\ \hline
 $\ell$ & 3 & 3.17 & 3.312 & 3.437 & 3.563 & 3.678 & 3.783 & 3.88 & 3.993 & 4.084 & 4.177 & 4.273 & 4.356 & 4.473 & 4.592 & 4.709\\ \hline
 $c(t,d)$ & 1.222 & 1.217 & 1.208 & 1.197 & 1.186 & 1.175 & 1.164 & 1.153 & 1.142 & 1.133 & 1.123 & 1.114 & 1.106 & 1.098 & 1.093 & 1.09\\ \hline
\end{tabular} 
\label{Table:constant1}
\end{table*}

\begin{table*}[t]
\centering
\vspace{0.5cm}
\caption{\small{The constant $c(t,d)$ for $t=2$ and ${d\in \{2,3,\cdots,17\}}$.}}
\begin{tabular}{| c | c | c | c |c | c | c | c | c | c | c | c | c | c | c | c | c |}
\hline
$d$ & 2 & 3 & 4 & 5 & 6  & 7 & 8 & 9 & 10 & 11 & 12 & 13 & 14 & 15 & 16 & 17\\ \hline
$\lambda_2$ & 1 & 0.659 & 0.69 & 0.681 & 0.666 & 0.653 & 0.639 & 0.619 & 0.592 & 0.57 & 0.56 & 0.554 & 0.549 & 0.546 & 0.541 & 0.536\\ \hline
$\lambda_3$ &  & 0.341 &  & & &  & &  & & &  & & & & & \\ \hline
$\lambda_4$ &  &  & 0.31 &  & &  & & &  & & & &  & & &\\\hline
$\lambda_5$ &  & &  & 0.319 & &  & & &  & & & &  & & &\\ \hline
$\lambda_6$ &  &  & &  & 0.334 &  & & &  & & & &  & & &\\ \hline
$\lambda_7$ &  &  & &  &  & 0.347 & & &  & 0.001 & 0.049 & 0.09 & 0.059 & 0.022 & 0.001 &\\ \hline
$\lambda_8$ &  &  & &  & & & 0.361 & &  & & & 0.004 &  0.074 & 0.144 & 0.187 & 0.199\\ \hline
$\lambda_9$ &  &  & &  & &  &  & 0.381 & 0.002 & & & &  & & & \\ \hline
$\lambda_{10}$ &  & & &  & &  & &  & 0.406 & & & &  & & & \\ \hline
$\lambda_{11}$ &  & & &  & &  & & &  & 0.429 & & &  & & & \\ \hline
$\lambda_{12}$ &  & & &  & &  & & &  & & 0.391 & &  & & & \\ \hline
$\lambda_{13}$ &  & & &  & &  & & &  & &  & 0.352 &  & & & \\ \hline
$\lambda_{14}$ &  &  & &  & &  & & &  & & &  & 0.317 & & & \\ \hline
$\lambda_{15}$ &  &  & &  & &  & & &  & & &  &  & 0.288 & & \\ \hline
$\lambda_{16}$ &  &  & &  & &  & & &  & & &  &  & & 0.271 & \\ \hline
$\lambda_{17}$ &  &  & &  & &  & & &  & & &  &  & & & 0.265\\ \hline
 $\ell$ & 2 & 2.257 & 2.367 & 2.474 & 2.573 & 2.659 & 2.741 & 2.843 & 2.969 & 3.085 & 3.126 & 3.15 & 3.174 & 3.193 & 3.214 & 3.242\\ \hline
 $c(t,d)$ & 0.597 & 0.582 & 0.572 & 0.562 & 0.553 & 0.545 & 0.538 & 0.531 & 0.528 & 0.527 & 0.526 & 0.526 & 0.526 & 0.525 & 0.525 & 0.525\\ \hline
\end{tabular} 
\label{Table:constant2}
\end{table*}

\begin{table*}[t]
\centering
\vspace{0.5cm}
\caption{\small{The constant $c(t,d)$ for $t=3$ and ${d\in \{2,3,\cdots,17\}}$.}}
\begin{tabular}{| c | c | c | c |c | c | c | c | c | c | c | c | c | c | c | c | c |}
\hline
$d$ & 2 & 3 & 4 & 5 & 6  & 7 & 8 & 9 & 10 & 11 & 12 & 13 & 14 & 15 & 16 & 17\\ \hline
$\lambda_2$ & 1 & 0.97 & 0.889 & 0.844 & 0.807 & 0.784 & 0.759 & 0.737 & 0.72 & 0.704 & 0.686 & 0.668 & 0.653 & 0.639 & 0.632 & 0.63\\ \hline
$\lambda_3$ &  & 0.03 &  & & &  & &  & & &  & &  &  & &\\ \hline
$\lambda_4$ &  &  & 0.111 &  & &  & & &  & & & &  &  & &\\\hline
$\lambda_5$ &  & &  & 0.156 & &  & & &  & & & &  &  & &\\ \hline
$\lambda_6$ &  &  & &  & 0.193 &  & & &  & & & &  &  & &\\ \hline
$\lambda_7$ &  &  & &  &  & 0.216 & & &  & &  &  &  &  & &\\ \hline
$\lambda_8$ &  &  & &  & & & 0.241 & &  & & &  &  &  & &\\ \hline
$\lambda_9$ &  &  & &  & &  &  & 0.263 &  & & & &  &  & &\\ \hline
$\lambda_{10}$ &  & & &  & &  & &  & 0.28 & & & &  &  & &\\ \hline
$\lambda_{11}$ &  & & &  & &  & & &  & 0.296 & & &  &  & &\\ \hline
$\lambda_{12}$ &  & & &  & &  & & &  & & 0.314 & &  &  & &\\ \hline
$\lambda_{13}$ &  & & &  & &  & & &  & &  & 0.332 & 0.001 &  & 0.045 & 0.11\\ \hline
$\lambda_{14}$ &  &  & &  & &  & & &  & & &  & 0.346 &  & & \\ \hline
$\lambda_{15}$ &  &  & &  & &  & & &  & & &  &  & 0.361 & & \\ \hline
$\lambda_{16}$ &  &  & &  & &  & & &  & & &  &  &  & 0.323 & \\ \hline
$\lambda_{17}$ &  &  & &  & &  & & &  & & &  & &  & & 0.26\\ \hline
 $\ell$ & 2 & 2.021 & 2.118 & 2.207 & 2.295 & 2.366 & 2.442 & 2.515 & 2.577 & 2.639 & 2.709 & 2.781 & 2.848 & 2.909 & 2.945 & 2.952\\ \hline
 $c(t,d)$ & 0.388 & 0.388 & 0.387 & 0.384 & 0.381 & 0.378 & 0.375 & 0.372 & 0.37 & 0.367 & 0.365 & 0.363 & 0.363 & 0.362 & 0.362 & 0.362\\ \hline
\end{tabular} 
\label{Table:constant3}
\end{table*}
\end{theorem}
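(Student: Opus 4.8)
The plan is to reduce the success of the peeling recovery to a density-evolution (DE) analysis on the subgraph induced by the defective items, and to read off the test count from a simple edge-counting identity. First I would fix the number of right nodes to $M=c(t,d)K$ and record the degree identity $Mr=N\ell$ (edges counted from the left equal edges counted from the right), which gives $r=\ell N/M=\ell N/(c(t,d)K)$. Since each right node carries $s=t\log(r+1)+1$ tests, the total is
\[
m = Ms = c(t,d)K\bigl(t\log(r+1)+1\bigr)=c(t,d)K\Bigl(t\log\bigl(\tfrac{\ell N}{c(t,d)K}+1\bigr)+1\Bigr),
\]
which is exactly the claimed formula. Everything then rests on proving that $M=c(t,d)K$ right nodes suffice for the peeling decoder to identify \emph{all} defectives with probability approaching $1$.

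Next I would recast recovery as a threshold-$t$ peeling process on the bipartite graph $G_D$ induced by the defective items. Because defectiveness is i.i.d.\ and independent of the graph, the defective left nodes inherit the degree distribution $L(x)$, while a right node of degree $r$ has a number of defective neighbors that is $\mathrm{Binomial}(r,K/N)$; in the sub-linear regime ($K/N\to0$, $r\to\infty$) this converges to $\mathrm{Poisson}(\alpha)$ with $\alpha=rK/N=\ell/c(t,d)$. By \cite[Lemma~1]{DBLP:journals/corr/abs-1901-07635}, a right node with at most $t$ undecoded defective neighbors is resolved correctly (the counting test gives the count and BCH decoding locates the defectives, with no false positives), so recovery succeeds iff threshold-$t$ peeling removes every defective left node from $G_D$. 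Exploiting the locally tree-like structure of the ensemble, I would analyze peeling by DE: letting $x$ denote the probability that a variable-to-check edge still carries an ``undecoded'' message and using Poisson thinning on the check side (the number of \emph{other} undecoded defective neighbors seen along an edge is $\mathrm{Poisson}(\alpha x)$, and the check fails to resolve iff this is at least $t$), the recursion is
\[
x \;\longleftarrow\; \lambda\!\left(1-\sum_{k=0}^{t-1}e^{-\alpha x}\frac{(\alpha x)^{k}}{k!}\right),\qquad \lambda(x)\defeq \frac{L'(x)}{L'(1)},\quad x_0=1.
\]
I would then define $c(t,d)=\ell/\alpha^\star$, where $\alpha^\star$ is the largest $\alpha$ for which this fixed-point map stays strictly below the identity on $(0,1]$, and for each $(t,d)$ optimize $\lambda$ (supported on degrees at most $d$) to minimize this constant; the entries of Tables~\ref{Table:constant1}--\ref{Table:constant3} are exactly these optimized profiles, and for them one verifies $\lambda(1-\sum_{k<t}e^{-\alpha x}(\alpha x)^{k}/k!)<x$ for all $x\in(0,1]$, so $x_i\to0$.

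Finally I would upgrade the DE prediction to a high-probability statement about the actual graph. Using the martingale/concentration machinery of the peeling decoder (as in the companion regular-graph analysis \cite{DBLP:journals/corr/abs-1901-07635}), the decoding trajectory concentrates around the DE curve, so after a constant number of rounds at most $\epsilon K$ defectives remain for any fixed $\epsilon>0$; concentration is valid here because $|D|=\mathrm{Binomial}(N,K/N)\to\infty$ and $M=c(t,d)K\to\infty$. The genuinely delicate step---and the one I expect to be the main obstacle---is the completion argument: I must show that once only $o(K)$ defectives remain, the induced subgraph contains no nonempty $(t+1)$-stopping set (a set $S$ of defectives every adjacent check of which has at least $t+1$ neighbors in $S$), so that peeling proceeds all the way to zero rather than stalling at a small residual. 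This requires an expansion/girth argument for the sparse random bipartite ensemble, generalizing the ``no small stopping set'' arguments from erasure decoding to the threshold-$t$ setting, and it is precisely what promotes ``all but $o(K)$ identified'' to ``all identified with probability approaching $1$.''
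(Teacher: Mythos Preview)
Your core argument matches the paper's almost exactly: the paper derives the same density-evolution recursion (its normalized variable $\phi_j$ and parameter $\psi=r\gamma$ are your $x$ and $\alpha$), takes the Poisson limit on the check side, imposes the constraint that the DE map stay strictly below the identity on $(0,1]$, and defines $c(t,d)=\ell/\psi^\star$; the test count then drops out of $Mr=N\ell$ and $s=t\log(r+1)+1$ just as you wrote. The one presentational difference is that the paper makes the degree-profile optimization explicit as a two-stage procedure---for each fixed $\psi$ a linear program in the $\lambda_i$'s (objective $-\psi\sum_i\lambda_i/i$, DE constraint, simplex constraints), followed by a one-dimensional minimization over $\psi$---whereas you describe the optimization only informally. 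The tables are produced by that LP.

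Where you diverge from the paper is in your last paragraph. You correctly flag that DE convergence $x_i\to0$ only controls the \emph{expected} fraction of unidentified defectives, and that a concentration step plus a ``no small $(t{+}1)$-stopping-set'' expansion argument is needed to promote this to ``\emph{all} defectives identified with probability approaching $1$.'' The paper does \emph{not} supply this completion step: its proof ends with the assertion that $\phi_{j+1}<\phi_j$ ``guarantees that $\lim_{j\to\infty}\phi_j\to0$'' and treats this as sufficient. So the obstacle you anticipate is real, but it is one the paper simply does not address; if your goal is to reproduce the paper's proof at its own level of rigor, the DE analysis plus the edge-count bookkeeping is all there is, and your proposed stopping-set argument is additional work beyond what the paper contains.
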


\section{Proof of Main Theorems}\label{sec:proofs}
\subsection{Proof of Theorem~\ref{theo:main1}}
Consider a group of $N$ items where each item is defective with probability $\gamma\triangleq \frac{K}{N}$. Also, consider {an irregular} bipartite graph with $N$ left nodes and $M$ right nodes where each right node is connected to $r$ left nodes. The left nodes are connected to the right nodes according to a left-node degree distribution given by ${L(x)= \sum_{i=1}^{d}L_ix^{i}}$ where $d$ and $L_i$ denote the maximum degree of a left node and the probability that a randomly selected left node in the graph has degree $i$, respectively. The average left degree can be computed by ${\ell=\sum_{i=1}^{d}iL_i}$. Since the number of edges connected to the left nodes is equal to the number of edges connected to the right nodes, the following equation holds.
\begin{equation}\label{eq:edgeBalance}
    N\ell=Mr
\end{equation}
The left edge degree distribution can be defined by ${\lambda(x)\triangleq \sum_{i=1}^{d}\lambda_ix^{i-1}}=\frac{L'(x)}{L'(1)}$ where $\lambda_i$ denotes the probability that a randomly selected edge in the graph is connected to a left node of degree $i$. It is easy to see that $L'(1)=\ell$. Thus, one can readily compute $\lambda_i=\frac{iL_i}{\ell}$. Using the fact that $\sum_{i=1}^{d}L_i=1$, we can rewrite the last equation as follows.
\begin{equation}\label{eq:avg-lft-dgr}
    \frac{1}{\ell}=\sum_{i=1}^{d}\frac{\lambda_i}{i}
\end{equation}
 
 We leverage the density evolution technique to analyze the fraction of defective items remains unidentified at the end of each iteration of the recovery algorithm.
 
 \begin{figure}
\vspace{0.5cm}
\begin{tikzpicture}
 
  \coordinate (O1) at (0,-0.7);
  \coordinate (O2) at (0,-3.5);
  \coordinate (O3) at (4,-2.15);
  \draw (O1) circle (0.4);
  \draw (O2) circle (0.4);
  \draw (O3) circle (0.4);
  \draw[black] (2.2,-1.9) ++(-.65*0.5, -7/6*0.5) rectangle ++(0.7,0.7);
   \draw[black] (6.5,-0.45) ++(-.65*0.5, -7/6*0.5) rectangle ++(0.7,0.7);
   \draw[black] (6.5,-3.35) ++(-.65*0.5, -7/6*0.5) rectangle ++(0.7,0.7);
  \draw [dotted] (0,-1.2) -- (0,-3);
  \draw [dotted] (6.5,-1.1) -- (6.5,-3.2);
  \path[black] (-90:0.7) node []{$v_1$};
  \path[black] (-90:3.5) node []{$v_{r-1}$};
  \path[black] (-44:3.05) node []{$c$};
  \path[black] (-28:4.5) node []{$v$};
  \path[black] (-5.75:6.6) node []{$c_1$};
  \path[black] (-29:7.45) node []{$c_{i-1}$};
  \path[black] (-32:3.75) node []{$e$};
  \draw (0.33,-0.92) -- (1.87,-2.15);
  \draw (0.33,-3.27) -- (1.87,-2.15);
  
  \draw (0,-1.71) -- (1.87,-2.15);
  \draw (0,-2.05) -- (1.87,-2.15);
  \draw (0,-2.4) -- (1.87,-2.15);

  \draw (4.4,-2.15) -- (6.5,-2.13);
  \draw (4.4,-2.15) -- (6.5,-2.45);
  \draw (4.4,-2.15) -- (6.5,-1.78);

  \draw (2.6,-2.15) -- (3.6,-2.15);
  \draw (4.283,-1.867) -- (6.17,-0.65);
  \draw (4.283,-2.433) -- (6.17,-3.6);
\end{tikzpicture}
\caption{A tree-like representation of the neighborhood of an edge $e$ between a left node $v$ of degree $i$ and a right node $c$ of degree $r$ in the right-regular bipartite  graph.}\label{fig:denevol1}
\end{figure}
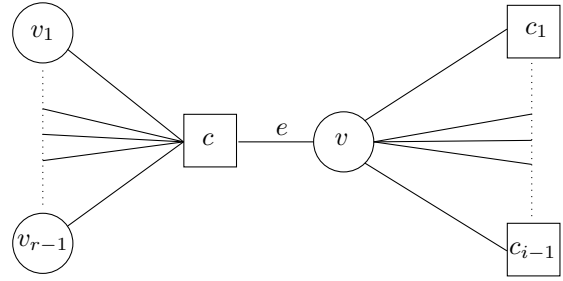 
  
 \begin{lemma}
 Let the probability that a randomly picked item is a defective item and remains unidentified at the end of iteration $j$ of the recovery algorithm be denoted by $p_j$. Also, let the probability that a randomly selected right node is resolved at iteration $j$ of the recovery algorithm be denoted by $q_j$. The following density evolution equations illustrates the relation between $p_j$ and $p_{j+1}$.
 \begin{equation}\label{eq:prob-denevol1}
q_j= \sum_{k=0}^{t-1}{r-1 \choose k}p_j^k(1-p_j)^{r-k-1},
    \end{equation}
    \begin{equation}\label{eq:prob-denevol2}
    p_{j+1}=\gamma \sum_{i=1}^{d} \lambda_i(1-q_j)^{i-1},
 \end{equation}
 where $t$, $r$, $d$, and $\gamma$ are the error correction capability of the BCH code, the degree of right nodes, the maximum degree of left nodes, and the probability that an item is defective, respectively. 
 \end{lemma}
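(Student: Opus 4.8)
The plan is to carry out a standard density-evolution (message-passing) analysis on the locally tree-like neighbourhood of a uniformly random edge of the bipartite graph, exactly as depicted in Figure~\ref{fig:denevol1}. First I would fix a directed edge $e=(v,c)$ from a left node $v$ to a right node $c$ and use the fact that, since the graph is drawn at random with bounded left and right degrees, the depth-$\Theta(\log N)$ neighbourhood of $e$ is a tree with probability $1-o(1)$; consequently, over the first $j=O(1)$ iterations the messages that reach $v$ along its other edges, and those that reach $c$ along its other edges, may be treated as mutually independent. Along $e$ I track two binary messages: the left-to-right message, equal to $1$ iff $v$ is a defective item that has not yet been identified, and the right-to-left message, equal to $1$ iff $c$ is able to identify $v$ using only the test outcomes described in Section~\ref{decoding} together with the information arriving on its other $r-1$ edges. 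I let $p_j$ and $q_j$ denote the probabilities that, on a uniformly random edge, the left-to-right and the right-to-left messages respectively equal $1$ at iteration $j$; this is the edge-perspective reading of the two quantities in the statement, and $p_0=\gamma$ since at the outset every defective item is unidentified.

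Next I would establish \eqref{eq:prob-denevol1}. As recalled in Section~\ref{decoding} (see also the proof of \cite[Lemma~1]{DBLP:journals/corr/abs-1901-07635}), the first test at $c$ returns the number of defective neighbours of $c$, and after the already-identified defectives among the neighbours of $c$ have been peeled off, the remaining unidentified defectives among them can all be recovered by BCH decoding precisely when there are at most $t$ of them. Hence, when $v$ is a defective item, the right node $c$ resolves $v$ if and only if at most $t-1$ of the \emph{other} $r-1$ neighbours of $c$ are still unidentified defectives, so that together with $v$ the BCH code sees at most $t$ errors. By the tree property the corresponding $r-1$ incoming messages are independent, each equal to $1$ with probability $p_j$, and summing the binomial probability of having at most $t-1$ of them equal to $1$ gives $q_j=\sum_{k=0}^{t-1}\binom{r-1}{k}p_j^{k}(1-p_j)^{r-1-k}$, which is \eqref{eq:prob-denevol1}.

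Then I would establish \eqref{eq:prob-denevol2}. Condition on the left endpoint $v$ of $e$ having degree $i$, which occurs with probability $\lambda_i=iL_i/\ell$ in the edge perspective; recall $\sum_i\lambda_i=1$ by \eqref{eq:avg-lft-dgr}. The left-to-right message on $e$ equals $1$ at iteration $j+1$ if and only if $v$ is defective --- an event of probability $\gamma$ that is independent of the graph and of the status of all other items --- and each of the other $i-1$ right neighbours of $v$ fails to resolve $v$; by the tree property these are $i-1$ independent events, each of probability $1-q_j$, so the conditional probability is $\gamma(1-q_j)^{i-1}$. Averaging over $i$ with the weights $\lambda_i$ yields $p_{j+1}=\gamma\sum_{i=1}^{d}\lambda_i(1-q_j)^{i-1}$, which is \eqref{eq:prob-denevol2}.

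The combinatorial steps above are routine; the real work, and the step I expect to be the main obstacle, is justifying the tree-like heuristic, i.e.\ showing that the true random fraction of defective-but-unidentified items after iteration $j$ concentrates around the deterministic trajectory obtained by iterating \eqref{eq:prob-denevol1}--\eqref{eq:prob-denevol2} from $p_0=\gamma$. This I would handle in the standard way: a Wormald-type differential-equation / Azuma--Hoeffding martingale argument, using that a single peeling step changes the relevant counts by $O(1)$, together with a union bound showing that the required constant-depth neighbourhoods are tree-like with probability $1-o(1)$ simultaneously for all edges. I would also flag one harmless bookkeeping point: the probability in the statement is phrased for a uniformly random \emph{node}, whereas the recursion closes on the edge-perspective quantity; the two differ only by replacing $\lambda_i$ with $L_i$ and the exponent $i-1$ with $i$, which changes neither the fixed-point equation nor the asymptotic count of unidentified defectives used later in the proof of Theorem~\ref{theo:main1}.
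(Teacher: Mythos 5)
Your proposal is correct and follows essentially the same route as the paper: the same tree-like neighborhood of a random edge, the same binomial computation for the probability that a right node is resolved given that at most $t-1$ of its other $r-1$ neighbors are unidentified defectives, and the same averaging over the edge-degree distribution $\lambda_i$ with the independent factor $\gamma$ for defectiveness. The only difference is that you are more explicit than the paper about the two points it leaves implicit --- the concentration/tree-likeness justification of the density-evolution heuristic and the edge-versus-node perspective in the statement --- both of which you correctly identify as not affecting the recursion.
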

 \begin{proof}
 A tree-like representation of the neighborhood of an edge $e$ between a left node $v$ of degree $i$ and a right node $c$ of degree $r$ is shown in Fig.~\ref{fig:denevol1}. The left node $v$ sends a \textquote{not identified} message to the right node $c$ at iteration $j+1$ through the edge $e$ if none of its other neighboring right nodes $\{c_k\}_{k=1}^{i-1}$ have been resolved at iteration $j$. This event happens with probability $(1-q_j)^{i-1}$. A randomly selected edge is connected to a left node of degree $i$ with probability $\lambda_i$. Thus, a randomly selected left node remains unidentified at the end of iteration $j$ with probability $\sum_{i=1}^{d}\lambda_i(1-q_j)^{i-1}$. Also, we know that each item is defective with probability $\gamma$. Hence, the probability that a randomly picked item is a defective item and remains unidentified at the end of iteration $j$ of the recovery algorithm is given by ${p_{j+1}=\gamma \sum_{i=1}^{d} \lambda_i(1-q_j)^{i-1}}$.

The right node $c$ passes a \textquote{resolved} message to the left node $v$ at iteration $j$ through the edge $e$ if among the other $r-1$ left nodes connected to it only $k\in\{0,1,\cdots,t-1\}$ items are unidentified. This event happens with probability $\sum_{k=0}^{t-1}{r-1 \choose k}p_j^k(1-p_j)^{r-k-1}$. A randomly selected edge is connected to a right node of degree $r$ with probability one. Hence, a randomly selected right node is resolved at iteration $j$ of the decoding algorithm with probability $q_j= \sum_{k=0}^{t-1}{r-1 \choose k}p_j^k(1-p_j)^{r-k-1}$.
  \end{proof}

 The density evolution equations \eqref{eq:prob-denevol1} and \eqref{eq:prob-denevol2} can be combined as 
 \begin{equation}\label{eq:combined_den}
 {p_{j+1}=\gamma \sum_{i=1}^{d}\lambda_i\left(1-\sum_{k=0}^{t-1}{r-1 \choose k}p_j^k(1-p_j)^{r-k-1}\right)^{i-1}}.
 \end{equation}
 Letting $r\rightarrow \infty$ and using the Poisson approximation, the equation \eqref{eq:combined_den} reduces to 
 \begin{equation}\label{eq:norm-dens-evol}
 p_{j+1}= \gamma \sum_{i=1}^{d}\lambda_i\left(1-\sum_{k=0}^{t-1}\frac{(r p_{j})^ke^{-r p_j}}{k!}\right)^{i-1}.
 \end{equation}
 Let $\phi_j \triangleq \frac{p_j}{\gamma}$ and $\psi\triangleq r\gamma$. We can rewrite \eqref{eq:norm-dens-evol} as follows:
 \begin{equation}
\phi_{j+1}=\sum_{i=1}^{d}\lambda_i\left(1-\sum_{k=0}^{t-1}\frac{(\psi \phi_j)^ke^{-\psi \phi_j}}{k!}\right)^{i-1}, 
 \end{equation}
  where $\phi_j$ denotes the probability that a randomly chosen defective item remains unidentified at the end of iteration $j$ of the recovery algorithm.

  The objective is to minimize the total number of tests, $m=M\times s$, where $M$ is the number of right nodes and $s$ is the number of rows in signature matrix. Substituting $\gamma=\frac{K}{N}$ in \eqref{eq:edgeBalance} results in ${M=\frac{\ell}{r\gamma}K}$. Using the fact that $\psi=r\gamma$, we can rewrite the number of right nodes as ${M=\frac{\ell}{\psi}K}$. 
  
  For a given $t$ and $d$, we can minimize the number of right nodes, ${M=\frac{\ell}{\psi}K}$, subject to the constraint ${\phi_{j+1} < \phi_j}$, so as to minimize the total number of the tests. The constraint ${\phi_{j+1} < \phi_j}$ guarantees that $\underset{j\rightarrow \infty}{\lim}\phi_j \rightarrow 0$. In other words, this constraint guarantees that the probability that a randomly selected defective item remains unidentified after running the recovery algorithm for sufficiently large number of iterations, approaches zero. Note that knowing $N$ and $\gamma$ means that $K$ is also known. Thus, the optimization problem reduces to minimizing the fraction $\frac{\ell}{\psi}$. It should be noted that minimizing the fraction $\frac{\ell}{\psi}$ is equivalent to minimizing the fraction $\frac{-\psi}{\ell}$. Using \eqref{eq:avg-lft-dgr}, one can readily see that $\frac{-\psi}{\ell}=-\psi \sum_{i=1}^{d}\frac{\lambda_i}{i}$. We perform a two-step optimization procedure as follows. First, given the parameters $t$ and $d$, we solve the following Linear Programming (LP) problem for any $\psi> 0$.
  
\begin{subequations}
\begin{align}
& \underset{\underset{i \in [d]}{\lambda_i} }{\text{min}}
& & -\psi \sum_{i=1}^{d}\frac{\lambda_i}{i}  \\
& \text{s.t.} & &  \sum_{i=1}^{d}\lambda_i\left(1-\sum_{k=0}^{t-1}\frac{(\psi \phi)^ke^{-\psi \phi}}{k!}\right)^{i-1} < \phi \\
& & &  \sum_{i=1}^{d} \lambda_i =1 \\
& & & \lambda_i \geq  0, \forall i \in [d] 
\end{align}
\end{subequations} 

For any $\psi> 0$, let $f(\psi)\triangleq -\psi \sum_{i=1}^{d}\frac{\lambda_i^{\star}}{i}$, where $\lambda_i^{\star}$'s denote the optimal value of $\lambda_i$'s attained by solving this LP problem. We then minimize $f(\psi)$ over all values of $\psi>0$ as follows.
\begin{equation}\label{eq:min}
\begin{aligned}
& \underset{\psi>0 }{\text{min}}
& & f(\psi)  \\
\end{aligned}
\end{equation}
We can solve this problem numerically and attain the optimal value of $\psi$ which is denoted by $\psi^{\star}$. Let ${c(t,d) \triangleq \frac{-1}{f(\psi^{\star})}}$. Then, the minimum number of right nodes is given by ${M=c(t,d)K}$. Substituting $M=c(t,d)K$ in \eqref{eq:edgeBalance}, one can easily compute ${r=\frac{\ell N}{c(t,d)K}}$. Therefore, the total number tests will become ${m=M\times s=c(t,d)K(t\log (\frac{\ell N}{c(t,d)K}+1)+1)}$.

\subsection{Proof of Theorem~\ref{thm:main2}}
In \cite{DBLP:journals/corr/abs-1901-07635}, there is a typo and the computational complexity presented for the testing algorithm is not correct. Below, we present the correct complexity for the testing algorithm.
The total number of tests is $m=\mathcal{O}(K\log \frac{N}{K})$. For each test, $r$ summations are executed. Thus, the testing algorithm has the computational complexity of $\mathcal{O}(rK\log \frac{N}{K})$. From \eqref{eq:edgeBalance}, one can easily see that $r=\mathcal{O}(\frac{N}{K})$. Then, the computational complexity of the testing algorithm can be stated as $\mathcal{O}(N\log \frac{N}{K})$. 

The total number of right nodes is $M=\mathcal{O}(K)$. The computational complexity of resolving each right node is given by $\mathcal{O}(\log r)$ when $t\leq 4$ (see the proof of \cite[Lemma 4]{DBLP:journals/corr/abs-1901-07635}). Therefore, the computational complexity of the recovery algorithm is $\mathcal{O}(K\log \frac{N}{K})$.  

 \begin{figure}
\includegraphics[width=0.52\textwidth]{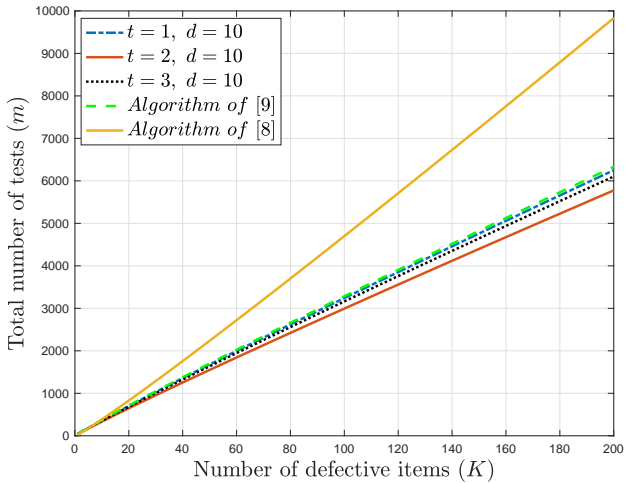}	
\caption{\small The number of required tests ($m$) to identify all defective items (for different values of $K$) among $N=2^{32}$ items obtained via analysis. 
}
\label{fig:best}
\end{figure} 

\section{Comparison Results}
In this section, we evaluate the performance of the proposed scheme via extensive simulations.

We compare the performance of the proposed scheme with the performance of two non-adaptive QGT schemes recently proposed in \cite{gebhard2019quantitative} and \cite{DBLP:journals/corr/abs-1901-07635} based on our theoretical analysis. Fig.~\ref{fig:best} illustrates the total number of tests ($m$) required to identify all defective items. The total number of items is considered to be ${N=2^{32}}$. As it can be seen, the proposed scheme, for $t=2$, requires the minimum number of tests to identify all the defective items. Also, it can be observe that the gap between the proposed scheme and the two other schemes increases as the number of defective items ($K$) grows. 

We also compare the performance of the proposed scheme with the performance of non-adaptive QGT schemes in \cite{gebhard2019quantitative} and \cite{DBLP:journals/corr/abs-1901-07635} using the Monte Carlo simulation. The probability of error, defined as the probability of a defective item to remain unidentified, is depicted in Fig.~\ref{fig:sim} for $K=100$ defective items among a population of $N=2^{16}$ items. For a target error probability, e.g., $10^{-5}$, the required number of tests is minimum for the proposed scheme for $t=3$. 
\begin{figure}
\includegraphics[width=0.52\textwidth]{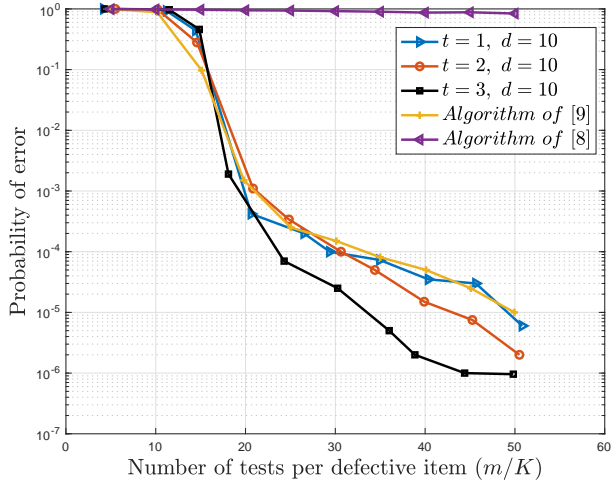}\vspace{-0.2cm}	
\caption{\small The probability of error obtained via Monte Carlo simulations for $N=2^{16}$ items among which $K=100$ items are defective.}\label{fig:sim}\vspace{-0.3cm}
\end{figure}
\bibliographystyle{IEEEtran}
\bibliography{QGTRefs}

\begin{thebibliography}{10}
\providecommand{\url}[1]{#1}
\csname url@samestyle\endcsname
\providecommand{\newblock}{\relax}
\providecommand{\bibinfo}[2]{#2}
\providecommand{\BIBentrySTDinterwordspacing}{\spaceskip=0pt\relax}
\providecommand{\BIBentryALTinterwordstretchfactor}{4}
\providecommand{\BIBentryALTinterwordspacing}{\spaceskip=\fontdimen2\font plus
\BIBentryALTinterwordstretchfactor\fontdimen3\font minus
  \fontdimen4\font\relax}
\providecommand{\BIBforeignlanguage}[2]{{%
\expandafter\ifx\csname l@#1\endcsname\relax
\typeout{** WARNING: IEEEtran.bst: No hyphenation pattern has been}%
\typeout{** loaded for the language `#1'. Using the pattern for}%
\typeout{** the default language instead.}%
\else
\language=\csname l@#1\endcsname
\fi
#2}}
\providecommand{\BIBdecl}{\relax}
\BIBdecl

\bibitem{sobel1966binomial}
M.~Sobel and P.~A. Groll, ``Binomial group-testing with an unknown proportion
  of defectives,'' \emph{Technometrics}, vol.~8, no.~4, pp. 631--656, 1966.

\bibitem{Mazumdar:2016:NGT:3024356.3025269}
\BIBentryALTinterwordspacing
A.~Mazumdar, ``Nonadaptive group testing with random set of defectives,''
  \emph{IEEE Trans. Inf. Theor.}, vol.~62, no.~12, pp. 7522--7531, Dec. 2016.
  [Online]. Available: \url{https://doi.org/10.1109/TIT.2016.2613870}
\BIBentrySTDinterwordspacing

\bibitem{jiangnear}
Z.~Jiang, N.~Polyanskii, and I.~Vorobyev, ``A near-optimal algorithm for
  adaptive searching of two counterfeit coins.''

\bibitem{WZC:17}
C.~Wang, Q.~Zhao, and C.~N. Chuah, ``Optimal nested test plan for combinatorial
  quantitative group testing,'' \emph{IEEE Transactions on Signal Processing},
  vol.~PP, no.~99, 2017.

\bibitem{8437774}
E.~Karimi, F.~Kazemi, A.~Heidarzadeh, and A.~Sprintson, ``A simple and
  efficient strategy for the coin weighing problem with a spring scale,'' in
  \emph{2018 IEEE International Symposium on Information Theory (ISIT)}, June
  2018, pp. 1730--1734.

\bibitem{S:60}
H.~S. Shapiro, ``{Problem E 1399},'' \emph{Amer.~Math.~Monthly}, vol.~67,
  no.~82, pp. 697--697, 1960.

\bibitem{B:09}
N.~H. Bshouty, ``Optimal algorithms for the coin weighing problem with a spring
  scale,'' in \emph{Conference on Learning Theory}, 2009.

\bibitem{gebhard2019quantitative}
O.~Gebhard, M.~Hahn-Klimroth, D.~Kaaser, and P.~Loick, ``Quantitative group
  testing in the sublinear regime,'' \emph{arXiv preprint arXiv:1905.01458},
  2019.

\bibitem{DBLP:journals/corr/abs-1901-07635}
\BIBentryALTinterwordspacing
E.~Karimi, F.~Kazemi, A.~Heidarzadeh, K.~R. Narayanan, and A.~Sprintson,
  ``Sparse graph codes for non-adaptive quantitative group testing,'' \emph{To
  appear in Proceedings of 2019 IEEE Information Theory Workshop (ITW)}, 2019.
  [Online]. Available: \url{http://arxiv.org/abs/1901.07635}
\BIBentrySTDinterwordspacing

\bibitem{L:75}
B.~Lindstr{\"o}m, ``Determining subsets by unramified experiments,'' in \emph{A
  survey of Statistical Design and Linear Models}, 1975.

\bibitem{djackov1975search}
A.~Djackov, ``On a search model of false coins,'' in \emph{Topics in
  Information Theory (Colloquia Mathematica Societatis Janos Bolyai 16).
  Budapest, Hungary: Hungarian Acad. Sci}, 1975, pp. 163--170.

\bibitem{NIPS2017_6641}
\BIBentryALTinterwordspacing
J.~Scarlett and V.~Cevher, ``Phase transitions in the pooled data problem,'' in
  \emph{Advances in Neural Information Processing Systems 30}, I.~Guyon, U.~V.
  Luxburg, S.~Bengio, H.~Wallach, R.~Fergus, S.~Vishwanathan, and R.~Garnett,
  Eds.\hskip 1em plus 0.5em minus 0.4em\relax Curran Associates, Inc., 2017,
  pp. 377--385. [Online]. Available:
  \url{http://papers.nips.cc/paper/6641-phase-transitions-in-the-pooled-data-problem.pdf}
\BIBentrySTDinterwordspacing

\bibitem{8410892}
A.~{El Alaoui}, A.~{Ramdas}, F.~{Krzakala}, L.~{Zdeborová}, and M.~I.
  {Jordan}, ``Decoding from pooled data: Phase transitions of message
  passing,'' \emph{IEEE Transactions on Information Theory}, vol.~65, no.~1,
  pp. 572--585, Jan 2019.

\bibitem{acharya2019optimal}
J.~Acharya and A.~T. Suresh, ``Optimal multiclass overfitting by sequence
  reconstruction from hamming queries,'' \emph{arXiv preprint
  arXiv:1908.03156}, 2019.

\bibitem{cao2014quantitative}
C.-C. Cao, C.~Li, and X.~Sun, ``Quantitative group testing-based overlapping
  pool sequencing to identify rare variant carriers,'' \emph{BMC
  bioinformatics}, vol.~15, no.~1, p. 195, 2014.

\bibitem{10.1007/978-3-030-25027-0_10}
G.~De~Marco, T.~Jurdzi{\'{n}}ski, and D.~R. Kowalski, ``Optimal channel
  utilization with limited feedback,'' in \emph{Fundamentals of Computation
  Theory}, L.~A. Gasieniec, J.~Jansson, and C.~Levcopoulos, Eds.\hskip 1em plus
  0.5em minus 0.4em\relax Cham: Springer International Publishing, 2019, pp.
  140--152.

\bibitem{wang2015group}
C.~Wang, Q.~Zhao, and C.-N. Chuah, ``Group testing under sum observations for
  heavy hitter detection,'' in \emph{2015 Information Theory and Applications
  Workshop (ITA)}.\hskip 1em plus 0.5em minus 0.4em\relax IEEE, 2015, pp.
  149--153.

\bibitem{6121984}
M.~{Cheraghchi}, A.~{Karbasi}, S.~{Mohajer}, and V.~{Saligrama},
  ``Graph-constrained group testing,'' \emph{IEEE Transactions on Information
  Theory}, vol.~58, no.~1, pp. 248--262, Jan 2012.

\bibitem{lin2001error}
S.~Lin and D.~J. Costello, \emph{Error control coding}.\hskip 1em plus 0.5em
  minus 0.4em\relax Pearson Education India, 2001.

\end{thebibliography}

\end{document}